\documentclass[a4paper]{amsart}
\date{November 11, 2013}
\subjclass{81V45,35P20}
\dedicatory{Dedicated to Leonid Pastur on the occasion of his 75th birthday}
\keywords{Statistical theory in momentum space, asymptotic behavior of the momentum density}
\author{Verena von Conta}
\author{Heinz Siedentop}
\address{Mathematisches Institut\\
 Ludwig-Maximilians-Universit\"at M\"unchen\\
 Theresienstra\ss e 39\\ 80333 M\"unchen\\ Germany}
\title[Atoms in Momentum Space]{Statistical Theory of the Atom in Momentum Space}

\newcommand{\ri}{\text{i}} % Imaginaere Einheit
\newcommand{\rd}{\mathrm{d}} % Ma"s
\newcommand{\rz}{\mathbb{R}}
\newcommand{\cz}{\mathbb{C}}

\def\cA{\mathcal{A}}
\newcommand{\cE}{\mathcal{E}}
\def\cI{\mathcal{I}}
\def\cJ{\mathcal{J}}
\def\cK{\mathcal{K}}
\def\cR{\mathcal{R}}

\def\const{\mathrm{const}}
\def\gtf{\gamma_{\mathrm{TF}}}
\def\tt{\tilde\tau}
\def\tr{\mathrm{tr}}

\def\tf#1{\mathcal{E}_{\mathrm{TF}}({#1})}
\def\itf#1{\mathcal{E}_{\mathrm{mTF}}({#1})}
\def\supp#1{\mathrm{supp}({#1})}

\newtheorem{theorem}{Theorem}
\newtheorem{lemma}{Lemma}

\begin{document}
\maketitle

\begin{abstract}
  We investigate the momentum energy functional for atoms found by
  Englert \cite{Englert1992} who also discussed the relation to the
  Thomas-Fermi functional. We prove that the momentum functional
  yields upon minimization the same value as the well-known
  Thomas-Fermi functional. In fact, we show an explicit relation
  between the minimizers of the functionals. Moreover, it turns out
  that the atomic momentum density converges on the scale $Z^{2/3}$ to
  the minimizer of the momentum energy functional.
\end{abstract}

\section{Introduction}

Atoms are described by the Hamiltonian
\begin{equation}
  \label{eq:atom-hamil}
  H_{N} := \sum_{n=1}^N( -\Delta_n-\frac Z{|x_n|}) + \sum_{1\leq m<n\leq N}{1\over|x_m-x_n|}
\end{equation}
self-adjointly realized in $\bigwedge_{n=1}^N L^2(\rz^3:\cz^q)$.  Here
$q$ is the number of spin states per electron, i.e., $q=2$ in the
physical case.  Already in the early days of quantum mechanics it was
obvious that an explicit solution for the ground state is impossible
and appropriate methods are needed to find the ground state energy and
its density, at least approximately. One of the most influential
methods is the \textquotedblleft statistical\textquotedblright method
developed by Fermi \cite{Fermi1927,Fermi1928} and Thomas
\cite{Thomas1927}: it is given by the Thomas-Fermi functional (Lenz
\cite{Lenz1932})
\begin{equation}
  \label{eq:tf}
  \tf\rho:= \cK(\rho) - \cA(\rho) + \cR(\rho)=\tfrac35 \gtf\int_{\rz^3}\rho(x)^{5/3}\rd x -
  \int_{\rz^3}\frac Z{|x|} \rho(x)\rd x + D[\rho]
\end{equation}
where $D[\rho]$ is the quadratic form of
\begin{equation}
  D(\rho,\sigma):=
  \tfrac12\int_{\rz^3}\rd x \int_{\rz^3}\rd y {\overline{\rho(x)}\sigma(y) \over |x-y|},
\end{equation}
the electrostatic interaction energy of the charge density $\rho$ with
the charge density $\sigma$.  Instead of the high dimensional problem
given by \eqref{eq:atom-hamil} the functional $\tf\rho$ depends on the
one-particle density, i.e., a quantity in three variables, only. The
guiding idea was that the minimum of the functional
\begin{equation}
  \label{minimum}
  E_\mathrm{TF}(Z) = \inf\{\tf\rho| \rho\in L^{5/3}(\rz^3), D(\rho,\rho)<\infty, \rho\geq0\}
\end{equation}
gives an approximation for the energy and its minimizer is an
approximation for the ground state density. That this is in fact true
is Lieb and Simon's \cite{LiebSimon1973,LiebSimon1977} celebrated
result (see also Lieb \cite{Lieb1981}). In the atomic case their
result for the energy says that the infimum of the Thomas-Fermi energy
and the ground state energy of $H_Z$ are equal up to an error of
$o(Z^{7/3})$, i.e.,
\begin{equation}
  \label{eq:grenz}
  {\inf\sigma(H_Z)- E_\mathrm{TF}(Z)\over Z^{7/3}} \to 0.
\end{equation}
Moreover they show that the rescaled quantum density $ \rho$ converges
to the rescaled Thomas-Fermi minimizer $\rho_{Z}$, i.e.,
  \begin{equation}
    \label{eq:density}
    Z^{-2} \rho(\cdot Z^{-1/3}) \to Z^{-2}\rho_Z(\cdot Z^{-1/3}) = \rho_{1}
  \end{equation}
  weakly in the limit $Z\to\infty$. Such a result is of great value to
  determine the linear response of atoms to perturbations that are
  local in position space.

  However, this is not applicable to perturbations that are momentum
  dependent since the momentum density -- as is well-known -- is not
  merely the Fourier transform of the position density.  To compute
  the linear response to such perturbations the position space
  statistical model of the atom is clearly inadequate. Englert
  \cite{Englert1992} realizing this inadequacy derived an energy
  functional for the ground state energy of an atom with atomic number
  $Z$ depending on the momentum density $\tau$ which remedies this
  problem and allows -- in a natural way -- for the treatment of
  purely momentum dependent potentials.

  It reads for fermions having $q$ spin states each
  \begin{multline}\label{ImpFunk}
    \itf\tau:= \cK_m(\tau) - \cA_m(\tau) + \cR_m(\tau) = \int_{\rz^3}
    \xi^2\tau(\xi) \rd\xi
    -  \tfrac32 \gtf^{-\frac12} Z\int_{\rz^3}\tau(\xi)^{2/3}  \rd\xi \\
    + \tfrac34\gtf^{-\frac12} \int_{\rz^3} \rd \xi \int_{\rz^3} \rd
    \eta \big(\tau_<(\xi,\eta)\tau_>(\xi,\eta)^{2/3}-\tfrac15
    \tau_<(\xi,\eta)^{5/3}\big)
  \end{multline}
  where $\gtf:=(6\pi^2/q)^{2/3}$ is the Thomas-Fermi constant,
  $\tau_<(\xi,\eta):= \min\{\tau(\xi),\tau(\eta)\}$, and
  $\tau_>(\xi,\eta):= \max\{\tau(\xi),\tau(\eta)\}$.

  The aim of this article is to establish the basic mathematical
  properties of $\cE_\mathrm{mTF}$ and to show that it yields not only
  the asymptotically correct energy but in fact does also give the
  asymptotically correct momentum density.

  \section{Domain of Definition of the Momentum Functional and Euler
    Equation}
\begin{theorem}
  The functional $\cE_{\mathrm{mTF}}$ is well-defined on real-valued
  functions in $L^1(\rz^3, (1+\xi^2)\rd\xi)$.
\end{theorem}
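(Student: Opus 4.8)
The plan is to bound the three pieces $\cK_m$, $\cA_m$, $\cR_m$ of $\itf\tau$ separately, showing each is finite for every $\tau$ in $L^1(\rz^3,(1+\xi^2)\rd\xi)$. Since the fractional powers $\tau^{2/3}$ and $\tau^{5/3}$ only make sense for nonnegative arguments, I read them as applied to $|\tau|$, so without loss of generality $\tau\ge 0$. The kinetic term needs no work: $|\cK_m(\tau)|\le\int_{\rz^3}\xi^2|\tau|\,\rd\xi\le\int_{\rz^3}(1+\xi^2)|\tau|\,\rd\xi<\infty$ by hypothesis.

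For the attraction term everything reduces to the finiteness of $\int_{\rz^3}\tau^{2/3}\rd\xi$. First I would write $\tau^{2/3}=\big(\tau(1+\xi^2)\big)^{2/3}(1+\xi^2)^{-2/3}$ and apply H\"older's inequality with exponents $3/2$ and $3$:
\[
  \int_{\rz^3}\tau^{2/3}\,\rd\xi \;\le\; \Big(\int_{\rz^3}(1+\xi^2)\tau\,\rd\xi\Big)^{2/3}\,\Big(\int_{\rz^3}(1+\xi^2)^{-2}\,\rd\xi\Big)^{1/3}.
\]
The first factor is finite by assumption; the second is finite exactly because we work in $\rz^3$, where $(1+\xi^2)^{-2}$ decays like $|\xi|^{-4}$ at infinity and is integrable. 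Thus $\cA_m(\tau)$ is finite and, as a by-product, $\int_{\rz^3}\tau^{2/3}<\infty$.

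The hard part will be the repulsion term $\cR_m$, a double integral with the min/max structure. The crucial pointwise fact is that, because $\tau_<\le\tau_>$, one has $\tau_<^{5/3}=\tau_<\,\tau_<^{2/3}\le\tau_<\,\tau_>^{2/3}$, whence the integrand obeys the two-sided bound
\[
  0 \;\le\; \tau_<\tau_>^{2/3} - \tfrac15\tau_<^{5/3} \;\le\; \tau_<\tau_>^{2/3}.
\]
Nonnegativity of the integrand lets me invoke Tonelli's theorem, so the iterated integral is an unambiguous element of $[0,\infty]$, and the upper bound leaves me to estimate $\int_{\rz^3}\int_{\rz^3}\tau_<\tau_>^{2/3}\,\rd\xi\,\rd\xi'$. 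Since $\tau_<\tau_>^{2/3}\le\tau(\xi)\tau(\xi')^{2/3}+\tau(\xi')\tau(\xi)^{2/3}$ pointwise, Fubini's theorem factorises the double integral and bounds it by $2\big(\int_{\rz^3}\tau\big)\big(\int_{\rz^3}\tau^{2/3}\big)$. Both factors are already known to be finite ($\int_{\rz^3}\tau\le\int_{\rz^3}(1+\xi^2)\tau$ and the H\"older bound above), so $\cR_m(\tau)$ is finite and $\itf\tau$ is well defined.

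I expect the only genuine obstacle to be organising this last estimate cleanly: the temptation is to bound $\tau_<\tau_>^{2/3}$ and $\tfrac15\tau_<^{5/3}$ separately, which would additionally force $\int_{\rz^3}\int_{\rz^3}\tau_<^{5/3}<\infty$ and hence $\tau\in L^{5/6}$; the better route is to keep the combination together and exploit the cancellation that renders the integrand nonnegative and dominated by $\tau_<\tau_>^{2/3}$ alone. The remaining ingredients---the single application of H\"older and the three-dimensional convergence of $\int_{\rz^3}(1+\xi^2)^{-2}\,\rd\xi$---are routine.
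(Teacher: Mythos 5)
Your proposal is correct and follows essentially the same route as the paper: the kinetic term is trivially bounded by the $L^1((1+\xi^2)\rd\xi)$ norm, the attraction term is handled by the identical H\"older inequality with exponents $3/2$ and $3$ against $(1+\xi^2)^{-2}\in L^1(\rz^3)$, and the repulsion is dominated via $\tau_<\tau_>^{2/3}\le\tau(\xi)\tau(\xi')^{2/3}+\tau(\xi')\tau(\xi)^{2/3}$, giving the same factorised bound $2\bigl(\int\tau\bigr)\bigl(\int\tau^{2/3}\bigr)$ that appears in the paper. Your additional remarks (the pointwise domination $\tau_<^{5/3}\le\tau_<\tau_>^{2/3}$ making the integrand nonnegative, and the appeal to Tonelli) are correct refinements of details the paper leaves implicit, not a different method.
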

\begin{proof}
  The kinetic energy $\cK_m$ is obviously well-defined. The finiteness
  of the attraction $\cA_m$ follows from
  \begin{equation}\label{km}
    \int\rd\xi |\tau(\xi)|^{2/3}
    \leq \left(\int{\rd \xi \over(1+\xi^2)^2}\right)^{1/3} \left(\int\rd \xi
      (1+\xi^2)|\tau(\xi)|\right)^{2/3} <\infty
  \end{equation}
  by H\"older's inequality. The finiteness of the repulsion $\cR_m$
  follows from the finiteness of its first contribution which, in
  turn, follows from \eqref{km}, since
  \begin{equation}
    \iint \rd \xi \rd \eta |\tau_>(\xi,\eta)|^{2/3} |\tau_<(\xi,\eta)| \leq 2 \int \rd \xi |\tau(\xi)|^{2/3} \int \rd \eta|\tau(\eta)|.  
  \end{equation}
\end{proof}

We set\\
$
\cI= \{\rho\in L^1\cap L^{5/3}(\rz^3)|\rho\geq0\}$, $ \cI_N=\{\rho\in \cI| \int \rho \leq N \}$, $\cI_{\partial N}=\{\rho\in \cI| \int \rho = N \}$\\
for densities in position space and\\
$\cJ\! = \{(1+|\cdot|^2)\tau\in L^1(\rz^3)| \tau\geq0\}$, $\cJ_N \!= \{\tau\in \cJ | \int\! \tau \leq N \}$,
${\cJ_{\partial N}= \{\tau\in \cJ| \int \tau = N \}}$\\
for densities in momentum space.

Although, $\cK_m$ and $-\cA_m$ are convex in $\tau$, the third summand
$\cR_m$ of $\cE_{\mathrm{mTF}}$ is not. We circumvent this problem
by substituting $\tau \rightarrow \tt^{3/2}$, i.e.,
\begin{equation}
  \label{eq:tt}
  \cE_s(\tilde\tau):=
  \itf{\tilde\tau^{3/2}}
\end{equation}
with $\tilde\tau\in L^{3/2}(\rz^3,(1+\xi^2)\rd \xi)$ and
$\tilde\tau\geq0$.
\begin{lemma}
  \label{lem:conv}
  The functional $\cE_s$ is strictly convex. 
\end{lemma}
\begin{proof}
  The first summand is obviously strictly convex, the second is
  linear. It remains to show the convexity of the repulsion
  term. Writing $\theta$ for the Heaviside function, we get
  \begin{align*}
    & \int_0^\infty \rd r \Big(\int_{\rz^3} \rd\xi [\tt(\xi)-r^2]_+\Big)^2\\
    &=\iint_{\rz^6}\rd\xi\rd\eta \int_0^\infty \rd r [\tt(\xi)-r^2]
    [\tt(\eta)-r^2] \theta(\tt(\xi)-r^2)\theta(\tt(\eta)-r^2)\\
    &=\iint_{\rz^6}\rd\xi\rd\eta \int_0^{\tt_<^{1/2}(\xi,\eta)} \rd r
    \big(\tt(\xi)\tt(\eta)-\tt(\xi)r^2-\tt(\eta)r^2+r^4\big)\\
    &=\iint_{\rz^6}\rd\xi\rd\eta \big(\tt(\xi)\tt(\eta)\tt_<(\xi,\eta)^{1/2}
    -\tfrac13\tt(\xi)\tt_<(\xi,\eta)^{3/2}\\
    &\qquad\qquad-\tfrac13\tt(\eta)\tt_<(\xi,\eta)^{3/2}
    +\tfrac15\tt_<(\xi,\eta)^{5/2}\big)\\
    &=\iint_{\rz^6}\rd\xi\rd\eta \big(\tt_>(\xi,\eta)\tt_<(\xi,\eta)^{3/2}
    -\tfrac13\tt_>(\xi,\eta)\tt_<(\xi,\eta)^{3/2}\\
    &\qquad\qquad-\tfrac13\tt_<(\xi,\eta)^{5/2}
    +\tfrac15\tt_<(\xi,\eta)^{5/2}\big)\\
    &=\frac23\iint_{\rz^6}\rd\xi\rd\eta \Big(\tt_<(\xi,\eta)^{3/2}
    \tt_>(\xi,\eta)-\tfrac15 \tt_<(\xi,\eta)^{5/2}\Big).
  \end{align*} 
  Since the first line is obviously convex, it shows the wanted
  result. Note that $\tt_<$ and $\tt_>$ are defined analogously to
  $\tau_<$ and $\tau_>$, respectively.
\end{proof}

\begin{lemma}
  \label{l2}
  Every minimizer of $\cE_{\mathrm{mTF}}$ on $\cJ$ is positive.
\end{lemma}
\begin{proof}
  Assume that $\tau$ is a minimizer of $\cE_{\mathrm{mTF}}$ on $\cJ$
  and suppose that the set $N_\tau:=\{\xi\in\rz^3|\tau(\xi)=0\}$ on
  which $\tau$ vanishes, would not be of measure zero. Then pick any
  $\sigma \in L^1(\rz^3,(1+\xi^2)\rd\xi)$ with
  $\tau(\xi)\sigma(\xi)=0$ for all $\xi$ but non-vanishing on
  $N_\tau$. Furthermore, assume $\varepsilon>0$. Then by the integral
  representation of the interaction term (Lemma~\ref{lem:conv}) and
  the substitution $\tt^{3/2}=\tau$ we get
  \begin{align}
    &\cE_{\mathrm{mTF}}(\tau+\varepsilon\sigma)-\cE_{\mathrm{mTF}}(\tau)\\
        &\begin{aligned} =&-\tfrac32 \gtf^{-\frac12} Z\int_{\rz^3}\rd\xi
     \varepsilon^{2/3}\sigma(\xi)^{2/3}\\
      &+ \tfrac32\cdot\tfrac34\gtf^{-\frac12} \int_0^\infty \rd r
      \Big(\int_{\rz^3}\rd\xi [\tau(\xi)^{2/3}-r^2]_+ +
      \int_{\rz^3}\rd\xi
      [\varepsilon^{2/3}\sigma(\xi)^{2/3}-r^2]_+\Big)^2 \\
      & -\tfrac32\cdot\tfrac34\gtf^{-\frac12} \int_0^\infty \rd r
      \Big(\int_{\rz^3}\rd\xi [\tau(\xi)^{2/3}-r^2]_+\Big)^2+O(\varepsilon)
    \end{aligned}\\
    &\begin{aligned} =&-\varepsilon^{2/3}\tfrac32 \gtf^{-\frac12}
      Z\int_{\rz^3}\rd\xi
      \sigma(\xi)^{2/3}+O(\varepsilon)\\
      &+2\cdot \tfrac34\cdot\tfrac32\gtf^{-\frac12} \int_0^\infty \rd
      r \Big(\int_{\rz^3}\rd\xi [\tau(\xi)^{\frac23}-r^2]_+\Big)
      \Big(\int_{\rz^3}\rd\eta
      [\varepsilon^{\frac23}\sigma(\eta)^{\frac23}-r^2]_+\Big)     
    \end{aligned}\\
    &\begin{aligned} \leq&-\varepsilon^{2/3}\tfrac32 \gtf^{-\frac12}
      Z\int_{\rz^3}\rd\xi
      \sigma(\xi)^{2/3} +O(\varepsilon)\\
      &+\tfrac94\gtf^{-\frac12} \Big[\int\limits_0^\infty \rd r
      \Big(\int\limits_{\rz^3}\!\rd\xi
      [\tau(\xi)^{\frac23}-r^2]_+\Big)^2\Big]^{\frac12}
      \Big[\int\limits_0^\infty \rd r \Big(\int\limits_{\rz^3}\!\rd\eta
      [\varepsilon^{\frac23}\sigma(\eta)^{\frac23}-r^2]_+\Big)^2\Big]^{\frac12}
    \end{aligned}\\
    &\begin{aligned} =&-\varepsilon^{2/3}\tfrac32 \gtf^{-\frac12}
      Z\int_{\rz^3}\rd\xi \sigma(\xi)^{2/3}+O(\varepsilon^{5/6}).
     \end{aligned}
  \end{align}
  For sufficiently small $\varepsilon$ this implies
  $$\cE_{\rm{mTF}}(\tau+\varepsilon\sigma)-\cE_{\rm{mTF}}(\tau)<0,$$
  i.e., $\tau$ cannot be a minimizer.
\end{proof}
\begin{lemma}
  The Euler equation which any minimizer of $\cE_{\mathrm{mTF}}$ on
  $\cJ$ fulfills is
  \begin{equation}
    \label{eq:euler}
    \sqrt\gtf|\xi|^2\tau(\xi)^{1/3}-Z
    +\int_{\rz^3}\rd\eta
    \Big(\tfrac32\tau(\eta)^{2/3}\tau_<(\xi,\eta)^{1/3}-\tfrac12
    \tau_<(\xi,\eta)\Big)=0.
  \end{equation}
\end{lemma}
\begin{proof}
  Instead of deriving the Euler equation for $\cE_\mathrm{mTF}$ we use
  $\cE_s$ (see \eqref{eq:tt}). 

  Assume $\tt$ be the minimizer which is strictly positive because of
  Lemma \ref{l2}. Thus we can pick any $\sigma\in L^{3/2}(\rz^3,
  (1+\xi^2) \rd \xi)$ and $|\sigma| \leq \tt$. Then, for
  $\varepsilon\in[-1,1]$, $\tt+\varepsilon \sigma$ is an allowed trial
  function and the function $F(\varepsilon) =
  \cE_s(\tt+\varepsilon\sigma)$ has a minimum at zero. We show that
  $F$ is differentiable at zero. 

  The first two summands are obviously differentiable. Thus, we concentrate on
  \begin{equation}
    \begin{split}
      T(\varepsilon)&:=\varepsilon^{-1}\int_0^\infty\rd r \left[\left(\int_{\rz^3}\rd \xi[\tt(\xi)+\varepsilon\sigma(\xi)-r^2]_+\right)^2- \left(\int_{\rz^3}\rd \xi[\tt(\xi)-r^2]_+\right)^2\right]\\
      &=\int_0^\infty\rd r \int_{\rz^3}\rd \xi \int_{\rz^3}\rd \eta
      {\left([\tt(\xi)+\varepsilon\sigma(\xi)-r^2]_+ -
          [\tt(\xi)-r^2]_+\right)\over\varepsilon}\\
      & \phantom{ =\int_0^\infty\rd r \int_{\rz^3}\rd \xi \int_{\rz^3}\rd \eta}\cdot\left([\tt(\eta)+\varepsilon\sigma(\eta)-r^2]_++[\tt(\eta)-r^2]_+\right)\\
      &=:\int_0^\infty\rd r \int_{\rz^3}\rd \xi \int_{\rz^3} \rd \eta
      I(\varepsilon,r,\xi,\eta).
    \end{split}
\end{equation}
Since $|a_+-b_+|\leq |a-b|$ for real $a$ and $b$,
$$|\sigma(\xi)| \left([\tt(\eta)+|\sigma(\eta)|-r^2]_++[\tt(\eta)-r^2]_+\right)$$
is an integrable majorant of the integrand independent of
$\varepsilon$. To apply dominated convergence, we split the integral
in two parts, namely the part where the pointwise limit of $I$ exists
and the rest: Thus
\begin{multline}
  \lim_{\varepsilon\to0}T(\varepsilon) \\
  = \lim_{\varepsilon\to0}\int_0^\infty\rd r \int_{\tt(\xi)=r^2}\rd
  \xi \int \rd \eta I(\varepsilon,r,\xi,\eta)
  + \lim_{\varepsilon\to0}\int_0^\infty\rd r \int_{\tt(\xi)\neq r^2}\rd \xi \int \rd \eta I(\varepsilon,r,\xi,\eta)\\
  = 2\int_0^\infty\rd r \int\rd \xi \int \rd \eta
  \sigma(\xi) \theta(\tt(\xi)-r^2) [\tt(\eta)-r^2]_+.
\end{multline}
Indeed, this proves that $F$ is differentiable. Integration in 
$r$ yields
\begin{multline}
  \int_{\rz^3}\rd\xi\sigma(\xi)\Big[\tfrac32 \xi^2\tt(\xi)^{1/2}
  -\tfrac32 \gtf^{-\frac12}Z\\
  +\tfrac32\cdot\tfrac34 \gtf^{-\frac12} \int_{\rz^3}\rd\eta
  \big(2\tt(\eta)\tt_<(\xi,\eta)^{1/2}-\tfrac23\tt_<(\xi,\eta)^{3/2}\big)\Big]=0.
\end{multline}
Since $\sigma$ is arbitrary we arrive at the desired Euler equation
\eqref{eq:euler}.
\end{proof}

Since the integrand is nonnegative, the Euler implies the following pointwise bound on any minimizer
\begin{equation}
  \label{eq:schranke}
  \tau(\xi) \leq  \gtf^{-3/2}Z^3|\xi|^{-6}.
\end{equation}

\section{Relation between Position and Momentum Functional} 
In this section we will see that each summand of $\cE_{\mathrm{mTF}}$
is obtained from the corresponding term of $\cE_{\mathrm{TF}}$ by mere
substitution -- at least for spherically symmetric decreasing
densities. To this end we set
\begin{align}
  S : L^1(\rz^3) &\rightarrow L^1(\rz^3)\\
  \tau&\mapsto \rho
\end{align}
where for all $x\in\rz^3$
\begin{equation}
  \label{eq:S}
  \rho(x) :=  \frac q{(2\pi)^3} \int_{|x| < \gtf^{1/2}|\tau(\xi)|^{1/3}}\rd \xi.
\end{equation}
Moreover, given $\rho\in L^1(\rz^3)$ we define its Fermi radius $r$ by
\begin{equation}
      \label{eq:r}
      r(s):= \sup\{|y| \mid \gtf^{1/2}|\rho(y)|^{1/3} \geq s \text{ for  a.e. } y\in\rz^3\}.
    \end{equation}
This allows to define the operator 
\begin{align}
  T:L^1(\rz^3)&\rightarrow L^1(\rz^3)\\
       \rho&\mapsto \tau
\end{align} 
where for all $\xi\in\rz^3$
\begin{equation}
  \label{eq:T}
  \tau(\xi) := \gtf^{-3/2} r(|\xi|)^3.
\end{equation}

Our first result is
\begin{theorem}\label{haupt} \hspace{.1cm}
  \begin{enumerate}
  \item For all $N\geq 0$ we have $\inf\itf{\cJ_{\partial N}} =
    \inf\tf{\cI_{\partial N}}$.
  \item Assume $N\leq Z$ and $\rho_N$ the minimizer of
    $\cE_{\mathrm{TF}}$ on $\cI_{\partial N}$. Then $T(\rho_N)$ is the
    unique minimizer of $\cE_{\mathrm{mTF}}$ on $\cJ_{\partial N}$.
  \item For $N>Z$ there exists no minimizer of $\cE_{\mathrm{mTF}}$ on
    $\cJ_{\partial N}$.
  \item There exists a unique minimizer $\tau_N$ of
    $\cE_{\mathrm{mTF}}$ on $\cJ_N$. Moreover, $\tau_N \in
    \cJ_{\partial\min\{N,Z\}}$.
  \end{enumerate}
\end{theorem}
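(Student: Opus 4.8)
The plan is to reduce the whole theorem to two ingredients: a term-by-term \emph{substitution identity} relating $\itf\cdot$ and $\tf\cdot$ on spherically symmetric, nonincreasing densities, and the behaviour of both functionals under symmetric decreasing rearrangement. First I would show that on symmetric decreasing densities the maps $S$ and $T$ of \eqref{eq:S}--\eqref{eq:T} are mutually inverse bijections preserving the mass, $\int S\tau=\int\tau$ and $\int T\rho=\int\rho$; the normalisation of $\gamma$ in \eqref{eq:S} is chosen precisely so that the Fubini computation $\tfrac{q}{(2\pi)^3}\int\rd x\,|\{\xi:\gamma^{1/2}\tau(\xi)^{1/3}>|x|\}|=\int\tau$ produces the factor $1$. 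Then I would establish the three identities
\begin{gather*}
  \int_{\rz^3}\xi^2 (T\rho)(\xi)\,\rd\xi=\tfrac35\gamma\int_{\rz^3}\rho(x)^{5/3}\,\rd x,\qquad
  \tfrac32\gamma^{-1/2}Z\int_{\rz^3}(T\rho)(\xi)^{2/3}\,\rd\xi=\int_{\rz^3}\tfrac{Z}{|x|}\rho(x)\,\rd x,\\
  \cR_m(T\rho)=D[\rho],
\end{gather*}
each by a coarea/layer-cake change of variables that carries the level sets of $\rho$ onto the level sets of $\tau=T\rho$. Summing gives $\itf{T\rho}=\tf\rho$ for every symmetric decreasing $\rho$, and likewise $\tf{S\tau}=\itf\tau$ for symmetric decreasing $\tau$.

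For part (1) I would then run a rearrangement argument on each side. On the momentum side, passing to the symmetric decreasing rearrangement $\tau^\ast$ leaves $\cR_m$ and $\int\tau^{2/3}$ unchanged, since both depend on $\tau$ only through its distribution function (equimeasurability), and can only lower the kinetic term $\int\xi^2\tau$, by the rearrangement inequality for the radially increasing weight $\xi^2$; hence $\itf{\tau^\ast}\le\itf\tau$ and $\tau^\ast\in\cJ_{\partial N}$. On the position side the standard Riesz rearrangement inequalities show that $\tf\cdot$ is likewise not increased by symmetric decreasing rearrangement. Thus both infima, over $\cJ_{\partial N}$ and over $\cI_{\partial N}$, equal their restrictions to the symmetric decreasing class, where $S$ and $T$ put the two problems into an energy- and mass-preserving bijection; this yields $\inf\itf{\cJ_{\partial N}}=\inf\tf{\cI_{\partial N}}$ for every $N\ge0$, whether or not the infimum is attained.

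Parts (2)--(4) then follow by importing the Lieb--Simon theory of $\tf\cdot$ \cite{LiebSimon1977,Lieb1981}. For $N\le Z$ let $\rho_m$ be the unique (symmetric decreasing) minimizer of $\cE_{\mathrm{TF}}$ on $\cI_{\partial N}$; then $T\rho_m\in\cJ_{\partial N}$ has energy $\tf{\rho_m}=\inf\itf{\cJ_{\partial N}}$ by part (1), so it minimizes. For uniqueness, any minimizer $\tau$ satisfies $\itf{\tau^\ast}=\itf\tau$, forcing equality in the kinetic rearrangement inequality; the \emph{strict} version (the weight $\xi^2$ being strictly increasing) then gives $\tau=\tau^\ast$, so $\tau$ is symmetric decreasing, $S\tau\in\cI_{\partial N}$ minimizes $\tf\cdot$, whence $S\tau=\rho_m$ and $\tau=T\rho_m$. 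Part (3) is the contrapositive: a minimizer on $\cJ_{\partial N}$ with $N>Z$ would, after the same rearrangement and application of $S$, produce a minimizer of $\tf\cdot$ on $\cI_{\partial N}$, which does not exist for $N>Z$. For part (4) one writes $\inf\itf{\cJ_N}=\inf_{N'\le N}\inf\itf{\cJ_{\partial N'}}=\inf\tf{\cI_N}$, and the Lieb--Simon minimizer on $\cI_N$ has mass $\min\{N,Z\}$; the same strictness argument gives uniqueness and $\tau_m=T\rho_m\in\cJ_{\partial\min\{N,Z\}}$.

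The main obstacle is the substitution identity, and within it the repulsion identity $\cR_m(T\rho)=D[\rho]$: its left-hand side is a double momentum integral of the min/max expression in \eqref{ImpFunk}, while its right-hand side, by Newton's theorem, equals $\tfrac12\int_0^\infty Q(r)^2 r^{-2}\,\rd r$ with $Q(r)$ the charge of $\rho$ inside radius $r$. Matching them requires carefully identifying, under the correspondence $\xi\leftrightarrow r$, the nested level-set volumes produced by $\tau_<$ and $\tau_>$ with the enclosed-charge function $Q$, and it is exactly here that the specific combination $\tau_<\tau_>^{2/3}-\tfrac15\tau_<^{5/3}$ and the constant $\tfrac34\gamma^{-1/2}$ are forced. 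A secondary technical point, needed only for the uniqueness assertions in (2) and (4), is the careful verification that $S$ and $T$ are genuine inverses modulo null sets and that equality in the kinetic rearrangement inequality really pins $\tau$ to be symmetric decreasing about the origin rather than merely equimeasurable with such a function.
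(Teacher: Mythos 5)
Your proposal is correct, and for most of the theorem it follows the same road as the paper: your substitution identities are exactly Statements 3 and 4 of Lemma~\ref{ST} (the paper proves them by the same Fubini/Newton layer-cake computations you sketch, including the forced combination $\tau_<\tau_>^{2/3}-\tfrac15\tau_<^{5/3}$ arising from splitting $D(K_{a},K_{b})$ into $D[K_{a}]$ plus a shell term), your rearrangement step is Lemma~\ref{m*} (invariance of $\cA_m$ and $\cR_m$ by equimeasurability, decrease of the $\xi^2$-weighted kinetic term), and part (1), nonexistence in (3), and existence in (2) and (4) are argued exactly as you do, importing Lieb--Simon. Where you genuinely diverge is the uniqueness mechanism. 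The paper substitutes $\tau=\tilde\tau^{3/2}$ and proves that $\cE_s(\tilde\tau):=\itf{\tilde\tau^{3/2}}$ is \emph{strictly convex}, the key computation rewriting the repulsion as $\const\int_0^\infty\big(\int\rd\xi\,[\tilde\tau(\xi)-r^2]_+\big)^2\rd r$; uniqueness in (2) and (4) then drops out without any symmetry analysis of the minimizer. You instead use equality in the kinetic rearrangement inequality to force a minimizer to equal its rearrangement, then transport TF uniqueness back through $S$. That route is sound, but it hinges on precisely the two technical points you flag: strictness of the rearrangement inequality for the weight $\xi^2$ (this does hold, and in fact falls out of the paper's own proof of Lemma~\ref{m*}: equality forces $\int_C\xi^2\,\rd\xi=R^2|C|$ while $\xi^2>R^2$ a.e.\ on $C=A\setminus A^*$, so $|C|=0$ and a.e.\ level set is the centered ball), and that $T\circ S=\mathrm{id}$ a.e.\ on symmetric decreasing densities (true, a double inversion of a decreasing radial profile, valid off the countable set of jump/plateau levels, but it must be checked --- the paper never needs it, since it only uses the energy identities in each direction separately, never that $S$ and $T$ are inverses). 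The trade-off: the paper's convexity lemma is self-contained in momentum space, at the price of the substitution trick and a page of computation; your argument is shorter and more geometric but leans on TF uniqueness plus the a.e.\ inversion of $S$ and $T$, so those two verifications are not optional polish but load-bearing steps of your proof.
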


To prove Theorem~\ref{haupt} we need a few preliminary results on the
transforms $S$ and $T$ and the way they relate the two functionals
$\cE_{\mathrm{TF}}$ and $\cE_{\mathrm{mTF}}$.
\begin{lemma} \label{ST} \hspace{.1cm}
  \begin{enumerate}
   \item The operators $S$ and $T$ are isometric on $L^1$.
  \item All elements in the image of $S$ and $T$ are spherically symmetric,
    nonnegative, and decreasing.
  \item For every spherically symmetric decreasing $\tau\in\cJ$ 
    $$\cE_{\mathrm{mTF}}(\tau) = \cE_{\mathrm{TF}}\circ S (\tau).$$
  \item For every spherically symmetric decreasing $\rho \in
    \cI$
    $$\cE_\mathrm{mTF}\circ T(\rho)=\tf\rho.$$
\end{enumerate}
\end{lemma}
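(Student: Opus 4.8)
The plan is to treat $S$ and $T$ as mutually inverse rearrangements on the cone of spherically symmetric, non-increasing densities, and then to read every term of both functionals as one and the same integral over a single phase-space body, evaluated by Tonelli in the two possible orders. Claim (2) is immediate: the set $\{\xi\mid \gamma^{1/2}|\tau(\xi)|^{1/3}>|x|\}$ in \eqref{eq:S} depends on $x$ only through $|x|$ and shrinks as $|x|$ grows, while $r$ in \eqref{eq:r} is non-increasing, so by \eqref{eq:T} both $S\tau$ and $T\rho$ are spherically symmetric, non-negative and decreasing. For (1) I would compute $\|S\tau\|_1$ directly: by Tonelli $\int S\tau=\frac{q}{(2\pi)^3}\int\rd\xi\,\big|B_{\gamma^{1/2}|\tau(\xi)|^{1/3}}\big|=\frac{q}{(2\pi)^3}\frac{4\pi}{3}\gamma^{3/2}\int|\tau|$, and since $\gamma^{3/2}=6\pi^2/q$ the constant equals $1$. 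On spherically symmetric decreasing densities $r$ is the generalized inverse of $t\mapsto\gamma^{1/2}\rho(t)^{1/3}$, whence $S\circ T=\mathrm{id}$ and $T\circ S=\mathrm{id}$; this yields the isometry of $T$ on that cone, which is the only place it is needed. Because $S$ and $T$ are inverse there and map into that cone, claim (4) is just (3) applied to $\tau=T\rho$ together with $S\circ T=\mathrm{id}$ (here $T\rho\in\cJ$ since $\|T\rho\|_1=\|\rho\|_1$ and its second moment equals $\cK(\rho)<\infty$ by the kinetic identity below), so it suffices to prove (3).

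To prove (3), fix a spherically symmetric decreasing $\tau\in\cJ$, set $\rho=S\tau$, $R(\xi):=\gamma^{1/2}\tau(\xi)^{1/3}$, $p(x):=\gamma^{1/2}\rho(x)^{1/3}$, and introduce the phase-space body $\Omega:=\{(x,\xi)\in\rz^3\times\rz^3\mid |x|<R(\xi)\}$. Its $\xi$-slice $\Omega^\xi=B_{R(\xi)}$ is a ball about the origin with $\frac{q}{(2\pi)^3}|\Omega^\xi|=\tau(\xi)$ by the normalisation just computed, while by \eqref{eq:S} the $x$-slice $\Omega_x=B_{p(x)}$ satisfies $\frac{q}{(2\pi)^3}|\Omega_x|=\rho(x)$. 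Hence each position-space term becomes an integral over $\Omega$ reducible by Tonelli in either order. For the kinetic term, $\cK_m(\tau)=\int|\xi|^2\tau(\xi)\rd\xi=\frac{q}{(2\pi)^3}\iint_\Omega|\xi|^2\,\rd x\,\rd\xi=\frac{q}{(2\pi)^3}\int\rd x\,\frac{4\pi}{5}p(x)^5=\tfrac35\gamma\int\rho^{5/3}=\cK(\rho)$, the first equality by the $\xi$-slice and the third by the $x$-slice. For the attraction the same body carries the weight $Z/|x|$: integrating the $x$-slice first uses $\int_{B_R}\rd x/|x|=2\pi R^2$, so $\cA(\rho)=\frac{q}{(2\pi)^3}\iint_\Omega Z/|x|=\frac{q}{(2\pi)^3}\int 2\pi Z\,R(\xi)^2\rd\xi=\tfrac32\gamma^{-1/2}Z\int\tau^{2/3}=\cA_m(\tau)$, the constant collapsing to the right value via $q\gamma^{3/2}=6\pi^2$.

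The repulsion term is the substantive step and the place where the precise form of Englert's kernel is forced. Writing $\rho(x)=\frac{q}{(2\pi)^3}|\Omega_x|$ in both factors turns $D[\rho]$ into an integral over $\Omega\times\Omega$, and integrating out the two position variables first leaves, for fixed $\xi,\xi'$, the Coulomb energy $W(R(\xi),R(\xi')):=\int_{B_{R(\xi)}}\int_{B_{R(\xi')}}|x-x'|^{-1}\,\rd x\,\rd x'$ of two \emph{concentric} uniform balls. Spherical symmetry is essential here: both slices are balls about the origin, so Newton's theorem and the potential of a uniform ball apply cleanly, giving $W=\frac{8\pi^2}{3}\big(R_>^2R_<^3-\tfrac15R_<^5\big)$ with $R_<=\min\{R(\xi),R(\xi')\}$, $R_>=\max\{R(\xi),R(\xi')\}$. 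Substituting $R(\xi)^2=\gamma\tau(\xi)^{2/3}$ and $R(\xi)^3=\gamma^{3/2}\tau(\xi)$, and noting that $R_<,R_>$ correspond to $\tau_<,\tau_>$, reproduces exactly the bracket $\tau_<\tau_>^{2/3}-\tfrac15\tau_<^{5/3}$ of \eqref{ImpFunk}; the prefactor $\frac12\frac{q^2}{(2\pi)^6}\frac{8\pi^2}{3}\gamma^{5/2}$ reduces to $\tfrac34\gamma^{-1/2}$ because $q^2\gamma^3=36\pi^4$. Thus $D[\rho]=\cR_m(\tau)$, and (3) follows.

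All interchanges are legitimate by Tonelli, since every integrand is non-negative and the relevant integrals are finite by the integrability established above. The only computation demanding genuine care is the concentric two-ball energy $W$ together with the bookkeeping of the Thomas--Fermi constant $\gamma$; everything else is Fubini applied to the single body $\Omega$.
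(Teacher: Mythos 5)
Your proof is correct, and its analytic core coincides with the paper's: both rest on Tonelli interchanges, Newton's theorem applied to two \emph{concentric} uniform balls (yielding $W=\tfrac{8\pi^2}{3}(R_<^3R_>^2-\tfrac15 R_<^5)$, which is exactly the paper's passage \eqref{r1}--\eqref{r5}), and the same constant bookkeeping via $q\gamma^{3/2}=6\pi^2$; I have checked your prefactors $\tfrac35\gamma$, $\tfrac32\gamma^{-1/2}Z$ and $\tfrac34\gamma^{-1/2}$ and they are right. Where you genuinely differ is in the organization. First, you unify all three terms of claim (3) as integrals over the single body $\Omega=\{(x,\xi):|x|<\gamma^{1/2}\tau(\xi)^{1/3}\}$, using that for spherically symmetric decreasing $\tau$ \emph{both} slices of $\Omega$ are balls centered at the origin; the paper instead handles each term by separate layer-cake manipulations and proves the equivalence of the integration constraints by hand (the two displays surrounding \eqref{eq:24}) -- precisely the fact your ball-slice identification packages once and for all. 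Second, and more substantially, you obtain claim (4) and the isometry of $T$ as corollaries of claim (3), the isometry of $S$, and $S\circ T=\mathrm{id}$ on the cone of spherically symmetric decreasing densities, whereas the paper reproves (4) and $\|T\rho\|_1=\|\rho\|_1$ by parallel direct computations with the Fermi radius via the sandwich \eqref{Hannelore}--\eqref{Helmuth}. Your route eliminates the duplication and makes the inverse-pair structure of $S$ and $T$ explicit; the paper's route avoids needing $S\circ T=\mathrm{id}$ as a separate fact. You were also rightly careful on the one point where circularity threatens: the kinetic identity holds with values in $[0,\infty]$, so $\cK_m(T\rho)=\cK(\rho)<\infty$ certifies $T\rho\in\cJ$ \emph{before} claim (3) is invoked.

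Two brief caveats, neither fatal. Your phrase ``$r$ is the generalized inverse, whence $S\circ T=\mathrm{id}$'' compresses exactly the content of \eqref{Hannelore}--\eqref{Helmuth}: these give $\{\xi:|x|<r(|\xi|)\}\subseteq\{\xi:|\xi|\le\gamma^{1/2}\rho(x)^{1/3}\}\subseteq\{\xi:|x|\le r(|\xi|)\}$, and the outer sets differ by $\{\xi:r(|\xi|)=|x|\}$, which has positive measure for at most countably many radii $|x|$ (the level sets of the monotone $r$ are intervals), so $S(T\rho)=\rho$ only almost everywhere -- which suffices in $L^1$, but you should say so. And your restriction of the isometry of $T$ to the spherically symmetric decreasing cone is the correct reading of the lemma: for general $\rho$ the Fermi radius \eqref{eq:r} sees only the outermost level surface and isometry genuinely fails; the paper's own proof invokes monotonicity of $\rho$ at the same spot, so you are consistent with its intended scope.
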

\begin{proof}
  1.~The claim for $S$ follows easily by
  direct computation interchanging the $x$ and $\xi$ integration.
  
  To treat $T$ we may, without loss of generality, assume
  $\rho\geq0$. Moreover,
  \begin{equation}
    |x|<r(|\xi|)  \Rightarrow \gtf^{1/2}\rho(x)^{1/3} \geq |\xi| 
    \label{Hannelore}
\end{equation}
by definition of $r$ and since $\rho$ is monotonically decreasing.
Furthermore, the definition of $r$ provides
\begin{equation}
    |x|\leq r(|\xi|) \Leftarrow \gtf^{1/2}\rho(x)^{1/3}
  \geq |\xi|.\label{Helmuth}
  \end{equation}

We have
  \begin{equation}
    \label{eq:13}
    \|T(\rho)\|_1
    = \tfrac3{4\pi\gtf^{3/2}}\int \rd \xi \int_{|x|<r(|\xi|)} \rd x 
    = \tfrac3{4\pi\gtf^{3/2}} \int \rd x \int_{|x|<r(|\xi|)}\rd \xi.
  \end{equation}
  By \eqref{Hannelore}
  we get the estimate
  \begin{equation}
    \label{eq:14}
    \|T(\rho)\|_1\leq \tfrac3{4\pi\gtf^{3/2}} \int \rd x \int_{\gtf^{1/2}\rho(x)^{1/3} \geq |\xi|}\rd \xi =\int \rd x \rho(x).
  \end{equation}
  On the other hand, if we allow for $\leq$ instead of strict
  inequality on the integration constraints in \eqref{eq:13} we can
  also reverse  the inequality in \eqref{eq:14} using \eqref{Helmuth}.

  2.~The claims are obvious from the definitions.

  3.~We treat each term of the energy functional individually.  We
  start with the potential terms. Both follow by explicit calculation
  which we exhibit here only for the interaction potential since the
  external potential is an easy variant of it.  Given a radius $a>0$
  we set $K_a:=\chi_{\{x\in\rz^3 \mid\, |x|<a\}}$ to be the
  characteristic function of the ball of radius $a$ centered at the
  origin.  We get
  \begin{align}
    &\cR(S(\tau)) =  \left(\tfrac{q}{(2\pi)^3}\right)^2\iint\rd \xi \rd \eta D(K_{\gtf^{1/2}\tau(\xi)^{1/3}},K_{\gtf^{1/2}\tau(\eta)^{1/3}}) \label{r1}\\
    =& \left(\tfrac3{4\pi}\right)^2\gtf^{-1/2}\iint\rd \xi \rd \eta D(K_{\sqrt[3]{\tau_<(\xi,\eta)}},K_{\sqrt[3]{\tau_>(\xi,\eta)}})\label{r2}\\
    =&\tfrac9{(4\pi)^2}\gtf^{-\frac12}\iint D[K_{\sqrt[3]{\tau_<(\xi,\eta)}}]+ D(K_{\sqrt[3]{\tau_<(\xi,\eta)}},K_{\sqrt[3]{\tau_>(\xi,\eta)}}-K_{\sqrt[3]{\tau_<(\xi,\eta)}})\rd \xi\rd \eta\label{r3}\\
    =&\tfrac{9}{(4\pi)^2}\gtf^{-\frac12}\iint
    D[K_1]\tau_<(\xi,\eta)^{\frac53} +
    \tfrac{4\pi}{2\cdot3}\tau_<(\xi,\eta)2\pi
    \big(\tau_>(\xi,\eta)^{\frac23}
    -\tau_<(\xi,\eta)^{\frac23}\big) \rd \xi \rd \eta\label{r4}\\
    =&\tfrac34\gtf^{-1/2}\iint
    \tau_<(\xi,\eta)\tau_>(\xi,\eta)^{\frac23}- \tfrac15
    \tau_<(\xi,\eta)^{\frac53}\rd \xi \rd \eta\label{r5}
\end{align}
where we used the scaling properties of $D$ and Newton's theorem
\cite{Newton1972}.

The kinetic energy transforms as 
\begin{equation}
  \begin{split}
  \cK(S (\tau)) &=\tfrac35 \gtf \int \rd x S(\tau)(x)^{5/3}\\
&=3 \gtf \int \rd x \int_{t\leq S(\tau)(x)^{1/3}} \rd t t^4 
= \tfrac3{4\pi}\gtf \int \rd \xi \xi^2 \int_{|\xi|^3\leq S(\tau)(x)}\rd x.
\end{split}
\end{equation}
Given that $\tfrac3{4\pi}\int_{|x|<\tau(\eta)^{1/3}}\rd \eta \geq |\xi|^3$ implies $ \tau(\xi)^{1/3}\geq |x|$, we have
\begin{equation}\label{eq:24}
   \tfrac35 \gtf \int \rd x S(\tau)(x)^{5/3}\leq \tfrac3{4\pi} \gtf \int \rd \xi \xi^2 \int_{|x|\leq \gtf^{1/2} \tau(\xi)^{1/3}}\rd x = \int \xi^2 \tau(\xi) \rd \xi.
\end{equation}
Suppose $\tfrac3{4\pi}\int_{|x|<\tau(\eta)^{1/3}}\rd \eta \geq |\xi|^3$
would not imply $\tau(\xi)^{1/3}\geq |x|$. Then
\begin{equation}
  |\xi|^3 \leq \tfrac3{4\pi}\int_{|x|<\tau(\eta)^{1/3}}\rd \eta< \tfrac3{4\pi}\int_{\tau(\xi)<\tau(\eta)}\rd \eta \leq \tfrac3{4\pi}\int_{|\xi|>|\eta|}\rd \eta = |\xi|^3
\end{equation}
where we used in the last inequality that $\tau$ is spherically
symmetric and decreasing.

On the other hand, $\tfrac3{4\pi}\int_{|x|<\tau(\eta)^{1/3}}\rd
\eta \geq |\xi|^3$ follows from $\tau(\xi)^{1/3}>|x|$ as
\begin{equation}
  |\xi|^3= \tfrac3{4\pi} \int_{|\eta|\leq |\xi|}\rd \eta \leq \tfrac3{4\pi} 
  \int_{\tau(\xi) \leq \tau(\eta)}\rd \eta \leq \tfrac3{4\pi}
  \int_{|x|<\tau(\eta)^{1/3}}\rd \eta
\end{equation}
using again that $\tau$ is spherically symmetric and decreasing in the
first inequality.  Thus we can reverse the inequality in
\eqref{eq:24}, i.e.,
\begin{equation}
  \tfrac35 \gtf \int \rd x S(\tau)(x)^{5/3}\geq  \tfrac3{4\pi} \gtf 
  \int \rd \xi \xi^2 \int_{|x|< \gtf^{1/2} \tau(\xi)^{1/3}}\rd x 
  = \int \xi^2 \tau(\xi) \rd \xi.
\end{equation}

4. To prove that $\cE_{\mathrm{mTF}}\circ T(\rho) =\tf\rho$ we proceed
as in 3.  We begin with the kinetic energy:
\begin{multline}
  \cK_m(T(\rho)) = \int \xi^2 \gtf^{-3/2}r(|\xi|)^3  \rd\xi
= \tfrac3{4\pi}\gtf^{-3/2} \int\rd\xi \xi^2\int_{|x|<r(|\xi|)}\rd x \\
= \tfrac3{4\pi} \gtf^{-3/2}\int\rd \xi \xi^2\int_{|\xi|\leq\gtf^{1/2}\rho(x)^{1/3}} = \cK(\rho)
\end{multline}
where we used \eqref{Hannelore} and \eqref{Helmuth} in the penultimate
identity.

We skip again $\cA_m$ and go directly to $\cR_m$. Set
$r_<(|\xi|,|\eta|):=\min\{r(|\xi|), r(|\eta|)\}$ and
$r_>(|\xi|,|\eta|):=\max\{r(|\xi|), r(|\eta|)\}$. Then
\begin{equation}
  \cR_m(T(\rho)) = 
  \left(\tfrac3{4\pi}\right)^2 \gtf^{-1/2} \iint \rd \xi \rd \eta D(K_{\gtf^{-1/2}r_<(|\xi|,|\eta|)}, K_{\gtf^{-1/2}r_>(|\xi|,|\eta|)})
\end{equation}
adapting the steps \eqref{r5} to \eqref{r2}. Making the term explicit
and scaling $\gtf^{-1/2}$ out yields
\begin{multline}
  \label{eq:30}
   \cR_m(T(\rho)) = \tfrac12\left(\tfrac3{4\pi}\right)^2 \gtf^{-3} \iint {\rd x \rd y \over|x-y|} \int_{|x|<r(|\xi|)}\rd\xi \int_{|y|<r(|\eta|)}\rd\eta\\
 =\tfrac12 \left(\tfrac3{4\pi}\right)^2 \gtf^{-3} \iint {\rd x \rd y \over|x-y|}
 \int_{|\xi|\leq\gtf^{1/2}\rho(x)^{1/3}}\rd\xi \int_{|\eta|\leq\gtf^{1/2}\rho(y)^{1/3}}\rd\eta = \cR(\rho)
\end{multline}
where we used \eqref{Hannelore} and \eqref{Helmuth} again.

\end{proof}

Note that by the definition of $T$ and the relation between $r$ and
$\tau$ (Equation \eqref{eq:T}), and Theorem \ref{haupt} any bound on
the position space density implies a corresponding bound on the
momentum space density. In particular the bound $\gtf\rho(x)\leq
Z/|x|$ reproduces \eqref{eq:schranke} which we got already from the
Euler equation.

The Sommerfeld bound
\begin{equation}
  \label{eq:sommerfeld}
  \rho(x) \leq {27\over \pi^3\gtf^{3/2}|x|^6}
\end{equation}
implies
\begin{equation}
  \label{eq:impulssommer}
  \tau(\xi) \leq\left({3\over\pi\gtf|\xi|}\right)^{3/2}.
\end{equation}

\section{The Energy under Rearrangement\label{umordnung}} 

The fact that $\tf\rho$ decreases under spherically symmetric
rearrangement of $\rho$ is well-known (Lieb \cite[Theorem
2.12]{Lieb1981}). The same result holds for the momentum functional:
\begin{lemma} \label{m*}
  For $\tau \in \cJ$,
  \begin{equation}
    \itf {\tau^*} \leq \itf \tau
  \end{equation}
where $\tau^*$ is the spherically symmetric rearrangement of $\tau$.
\end{lemma}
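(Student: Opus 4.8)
The plan is to split $\itf\tau=\cK_m(\tau)-\cA_m(\tau)+\cR_m(\tau)$ and to note that only the kinetic term reacts to \emph{where} the level sets of $\tau$ sit, while the attractive and interaction terms depend on $\tau$ through its distribution function alone and are thus rearrangement-\emph{invariant}. First I would record that $\tau^*$ and $\tau$ are equimeasurable, so $\int(1+\xi^2)\tau^*\,\rd\xi\le\int(1+\xi^2)\tau\,\rd\xi$ and hence $\tau^*\in\cJ$; by the domain result of Section~2 all the terms below are finite, so no $\infty-\infty$ ambiguity occurs and it suffices to prove the single inequality $\cK_m(\tau^*)\le\cK_m(\tau)$ together with the two equalities $\cA_m(\tau^*)=\cA_m(\tau)$ and $\cR_m(\tau^*)=\cR_m(\tau)$.

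For the kinetic term I would use the layer-cake formula $\tau=\int_0^\infty\chi_{\{\tau>s\}}\,\rd s$ and Tonelli to write
\[
  \cK_m(\tau)=\int_0^\infty\Big(\int_{\{\tau>s\}}|\xi|^2\,\rd\xi\Big)\,\rd s .
\]
Since $\tau^*$ is symmetric decreasing, $\{\tau^*>s\}$ is the centered ball $B$ of the same measure as $\{\tau>s\}$; and for the radial \emph{increasing} weight $|\xi|^2$ this ball minimizes $\int_A|\xi|^2\,\rd\xi$ over all $A$ of fixed measure, since (with $R$ the radius of $B$) the excess mass $A\setminus B$ lies where $|\xi|^2\ge R^2$ while the equal missing mass $B\setminus A$ lies where $|\xi|^2\le R^2$. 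Thus $\int_{\{\tau^*>s\}}|\xi|^2\,\rd\xi\le\int_{\{\tau>s\}}|\xi|^2\,\rd\xi$ for every $s$, and integrating in $s$ gives $\cK_m(\tau^*)\le\cK_m(\tau)$. This is precisely the momentum-space analogue of the fact that $-\int Z|x|^{-1}\rho$ drops under rearrangement in $\cE_{\mathrm{TF}}$.

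The attractive term is immediate: equimeasurability gives $\int(\tau^*)^{2/3}\,\rd\xi=\int\tau^{2/3}\,\rd\xi$, so $\cA_m(\tau^*)=\cA_m(\tau)$ and it cancels from the comparison. The interaction term looks like the genuine obstacle, but its integrand $h(\tau(\xi),\tau(\xi')):=\tau_<\tau_>^{2/3}-\tfrac15\tau_<^{5/3}$ depends on $\xi,\xi'$ only through the \emph{values} $\tau(\xi),\tau(\xi')$, not through their positions. Writing $\nu$ for the push-forward of Lebesgue measure under $\tau$, Tonelli turns the double integral into $\cR_m(\tau)=\tfrac34\gamma^{-1/2}\iint h(s,t)\,\rd\nu(s)\,\rd\nu(t)$, which sees $\tau$ only through $\nu$; as $\tau$ and $\tau^*$ share the same $\nu$, we get $\cR_m(\tau^*)=\cR_m(\tau)$. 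Equivalently one may invoke $\cR_m=D\circ S$ --- the chain \eqref{r1}--\eqref{r5} in the proof of Lemma~\ref{ST} never used symmetry of $\tau$ --- together with the observation that $S(\tau)(x)=\tfrac{q}{(2\pi)^3}\,|\{\tau>|x|^3/\gamma^{3/2}\}|$ is a function of the distribution of $\tau$ alone, whence $S(\tau)=S(\tau^*)$ and $D[S(\tau)]=D[S(\tau^*)]$. Combining the three facts yields $\itf{\tau^*}\le\itf\tau$.

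I expect the hurdle to be conceptual rather than technical: because $D[\rho]$ \emph{strictly} decreases under rearrangement in the position functional, one is tempted to look for an inequality in $\cR_m$ as well, whereas here it is an exact identity. The only measure-theoretic care needed is that the reduction to $\nu$ is legitimate although $|\operatorname{supp}\tau|$ may be infinite; this is harmless since $h(a,0)=h(0,b)=0$ and in fact $h(a,b)\ge\tfrac45\min\{a,b\}^{5/3}\ge0$, so Tonelli applies to the nonnegative integrand without reservation. With $\cA_m$ and $\cR_m$ invariant, the whole lemma collapses to the single bathtub estimate $\cK_m(\tau^*)\le\cK_m(\tau)$.
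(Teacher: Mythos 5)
Your proposal is correct and follows essentially the same route as the paper: the same three-way split with $\cA_m$ and $\cR_m$ invariant because they depend on $\tau$ only through its distribution, and the kinetic inequality reduced via layer-cake to the bathtub comparison $\int_A\xi^2\,\rd\xi\geq\int_{A^*}\xi^2\,\rd\xi$, proved by exactly the paper's set-splitting argument ($A\setminus A^*$ lies where $\xi^2\geq R^2$, $A^*\setminus A$ where $\xi^2\leq R^2$). You are merely more explicit than the paper about the Tonelli/pushforward justification of $\cR_m$-invariance (which the paper dismisses as ``trivially invariant''), and your aside that $S(\tau)=S(\tau^*)$ is a valid alternative but not needed.
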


\begin{proof}
  The attraction $\cA_m$ is obviously invariant under rearrangement. The
  repulsion $\cR_m$ is -- by definition -- a superposition of rearranged terms
  only, i.e., is also trivially invariant. 

  Now, $\cK_m(\tau) = \int_0^\infty \rd t \int\xi^2 \chi_{\{\xi\in \rz^3 \mid \tau(\xi)>t\}}(\xi)\rd \xi$. 
 Thus it suffices to show that for any $A\subset \rz^3$ with finite measure
$$   \int \xi^2 \chi_A(\xi) \rd \xi \geq \int \xi^2 \chi_{A^*}(\xi) \rd \xi =  \int \xi^2 K_R(\xi)\rd \xi$$
where $R$ is defined by $|A|= (4\pi/3) R^3$, i.e., the radius of the
ball $A^*:=B_R(0)$ centered at the origin which has the same volume as
$A$. Now define $B:= B_R(0)\setminus A$, $C:= A\setminus B_R(0)$, and
$D:=A\cap B_R(0)$. Then $|B|=|C|$, and thus
\begin{align*}
  \int_{A^*} \xi^2 \rd \xi&= \int_B \xi^2 \rd \xi + \int_D \xi^2 \rd
  \xi \leq R^2\int_B \rd \xi+ \int_D \xi^2\rd \xi
  \leq \int_C \xi^2 \rd \xi +  \int_D \xi^2\rd \xi\\
  &= \int_A \xi^2\rd \xi.
\end{align*}
\end{proof}

Now, we can prove Theorem~\ref{haupt}:
\begin{proof}
  1.~Since the energies of both -- momentum and position --
  Thomas-Fermi functionals decrease under spherically symmetric
  rearrangement (Lemma \ref{m*} and \cite[Theorem 2.12]{Lieb1981}) we
  can restrict both functionals to spherically symmetric decreasing
  densities $\rho$ and $\tau$ as far as minimization is
  concerned. Since both $S$ and $T$ preserve the norm, Statement~3 of
  Lemma~\ref{ST} implies that $\inf\itf{\cJ_{\partial N}} \geq
  \inf\tf{\cI_{\partial N}}$ whereas Statement~4 implies the reverse
  inequality. This proves the first assertion of Theorem~\ref{haupt}.

  2.~Since $\cE_{\mathrm{TF}}$ has a unique minimizer $\rho_N$ on
  $\cI_{\partial N}$ (Lieb and Simon \cite[Theorems II.14 and
  II.17]{LiebSimon1977}), it follows from the preceding step 
 and
  Lemma~\ref{ST},4 that $T(\rho_N)$ minimizes $\cE_{\mathrm{mTF}}$ on
  $\cJ_{\partial N}$. It remains to show that there is no other
  minimizer of the momentum functional. This, however, follows from
  strict convexity of $\cE_s$.

  3.~Suppose $\tau_N$ is a minimizer of $\cE_{\mathrm{mTF}}$
  on $\cJ_{\partial N}$ for some $N>Z$. Then $S(\tau_N)$ has to be a
  minimizer of $\cE_{\mathrm{TF}}$ by Statement 1 and Lemma~\ref{ST},3 but
  this does not exist~\cite{LiebSimon1977}.

  4.~Again, if $\tau_N$ minimizes $\cE_{\mathrm{mTF}}$ on $\cJ_N$ then
  $S(\tau_N)$ minimizes $\cE_{\mathrm{TF}}$ on $\cI_N$. Thus, $\int
  \tau_N=\int S(\tau_N)=\min\{Z,N\}$. Uniqueness of $\tau_N$ follows
  from the strict convexity of $\cE_s$.
\end{proof}

\section{Asymptotic Exactness of Englert's Statistical Model of the
  Atom\label{asympt}}

By Theorem \ref{haupt} the infimum of $\itf\tau$ has the same
approximation properties as the atomic Thomas-Fermi functional and --
unlike the Thomas-Fermi functional -- gives the right appropriate
linear response to momentum dependent force. To show the latter, we
define for $\alpha\in \rz$,
\begin{equation}
  \label{eq:hamil}
  H_{N,\alpha} := H_N - \alpha \sum_{n=1}^N \varphi_Z(-\ri\nabla_n)
\end{equation}
with $\varphi_Z(\xi) := Z^{4/3}\varphi(Z^{-2/3}\xi)$. (Later, we will
require some mild conditions on $\varphi$.) Furthermore, we write
$\psi_Z$ for a ground state of the neutral atomic Hamiltonian. (Note
that we assume neutrality largely for convenience.)

We introduce some further useful notations:
\begin{itemize}
\item The one-particle ground state density of any state $\psi$ is
  \begin{equation}
    \rho_\psi(x) :
    = N\sum_{\sigma_1=1}^q...\sum_{\sigma_N=1}^q \int_{\rz^{3(N-1)}}\rd x_2...\rd x_N|\psi(x,\sigma_1;x_2,\sigma_2;\dots;x_N,\sigma_N)|^2
  \end{equation}
  in position space and
  \begin{equation}
    \label{eq:den}
    \tau_\psi(\xi) :
    = N\sum_{\sigma_1=1}^q...\sum_{\sigma_N=1}^q \int_{\rz^{3(N-1)}}\rd \xi_2...\rd \xi_N|\hat\psi(\xi,\sigma_1;\xi_2,\sigma_2;\dots;\xi_N,\sigma_N)|^2
  \end{equation}
  in momentum space.
\item The rescaled density of the ground state is written as
  \begin{equation}
    \label{eq:rescaled}
    \tilde\tau_{\psi_{Z}}(\xi) := Z \tau_{\psi_{Z}}(Z^{2/3}\xi).
  \end{equation}
\item The set of one-particle density matrices is $${\mathcal
  S}:=\{{\gamma\in\mathfrak S}^1 (L^2(\rz^3:\cz^q)) \mid
  0\leq\gamma\leq1\}.$$
\item We write $\gamma_{\psi}$ for the one-particle density matrix of
  an $N$-particle state $|\psi\rangle\langle \psi|$ (with the
  normalization convention $\tr\gamma_{\psi}=N$).
\item We write $e_j$ for the orthonormal eigenvectors of
  $\gamma\in\mathcal S$ and $\lambda_j$ for the eigenvalues. The
  momentum density of $\gamma$ is written as
    $$\tau_\gamma(\xi):=\sum_{\sigma=1}^q\sum_j\lambda_j |\hat
  e_j(\xi,\sigma)|^2.$$
\item We call $\tt_\gamma(\xi):=Z\tau_\gamma(Z^{2/3}\xi)$ the rescaled
  momentum density of $\gamma$.
\item The minimizer of $\cE_{\mathrm{mTF}}$ on $\cJ_Z$ is written as
  $\tau_{Z}$.
\item We call $\phi_Z := Z/|\cdot| - \rho_Z*|\cdot|^{-1}$ the
  Thomas-Fermi potential where $\rho_Z$ is -- as in Theorem \ref{haupt} -- the
  minimizer of the Thomas-Fermi functional.
\item For sufficiently small $\alpha$, we also introduce the effective
  one-particle Hamiltonian
  \begin{equation}
    \label{eq:eff}
    h_{Z,\alpha}:= - \Delta - \phi_Z - \alpha \varphi_Z(\ri^{-1}\nabla).
  \end{equation}
  We write, using the common abuse of notation, $h_{Z,\alpha}(\xi,x)$
  for its symbol (Hamilton function).
\end{itemize}

Our second main result is the limit for the ground state density of
$H_{Z,\alpha}$. This is essential for computing the linear response of
momentum dependent perturbations.
\begin{theorem}
  \label{thm:limit}
  Assume $(1+|\cdot|^{-2}) \varphi\in L^\infty(\rz^3)$ and uniformly
  continuous. Then
  \begin{equation}
    \lim_{Z\to\infty}\int_{\rz^3} \varphi(\xi)\tilde\tau_{\psi_{Z}}(\xi) \rd \xi
    = \int_{\rz^3}  \varphi(\xi)\tau_{1}(\xi) \rd \xi.
  \end{equation}
\end{theorem}
Physically speaking this shows that the momentum density of large
atoms is given asymptotically by Englert's momentum density functional
on the scale $Z^{2/3}$ which is the semiclassical scale with the
effective `Planck' constant $Z^{-1/3}$.

We now turn to the proof of the theorem. First we need a lower bound:
\begin{lemma}
  Assume $0\leq(1+|\cdot|^{-2})\varphi\in L^\infty(\rz^3)$, $\varphi$
  uniformly continuous and $\alpha\in[0,v]$ with $v:=1/(\| |\cdot|^{-2} \varphi\|_\infty)$. Then
  for every $\gamma\in\mathcal S$
  \begin{equation}
    \label{eq:sum}
    \tr(h_{Z,\alpha}\gamma) \geq {Z^{7/3}\over(2\pi)^3} 
    \int\limits_{h_{1,\alpha}(\xi,x)<0} \rd \xi \rd x\ h_{1,\alpha}(\xi,x) - o(Z^{7/3})
  \end{equation}
  uniformly in $\alpha$ for large $Z$.
  
\end{lemma}

\begin{proof}
  First we note that under our hypothesis, both, the allowed phase
  volume ($\nu=0$) and the semiclassical energy ($\nu=1$)
  $$ \int_{h_{Z,\alpha}(\xi,x)<0}\rd \xi \rd x\ h_{Z,\alpha}(\xi,x)^\nu$$
  of the effective Hamiltonian $h_{Z,\alpha}$ are finite. 
  
  Next we follow the lower bound of Lieb's asymptotic result (Lieb
  \cite[Section V.A.2]{Lieb1981}) modified by the additional momentum
  operator $\varphi_Z$. To this end we pick $g\in C^\infty_0(\rz^3)$
  as a spherically symmetric positive function with $\supp g \subset
  K_1$, $\int g^2=1$, and $g_R(x) := R^{3/2}g(Rx)$ its dilatation
  by $R$ which we choose as $R=Z^{1/2}$. Note that
  $\widehat{g_R}=\hat g_{R^-1}$. We have
  \begin{multline}
    \label{ungl}
    \tr(h_{Z,\alpha}\gamma)
    \geq {q\over (2\pi)^3}\int\rd \xi \rd x (h_{Z,\alpha}(\xi,x))_- \\
    - ZR^2 \|\nabla g\|^2 - \tr[(\phi_Z - \phi_Z* |g_R|^2)\gamma]
    - \alpha \tr\{[(\varphi_Z- \varphi_Z*
    |\widehat{g_R}|^2)(-\ri\nabla)]\gamma\}.
    \end{multline}
    The right hand side of the first line is the wanted main term. The
    second line consists of error terms only. The first two error terms
    are of order $O(Z^{7/3-1/30})$ as shown by Lieb. The third is new
    and needs an additional argument. We have
 \begin{multline}
   |\tr[(\varphi_Z(-\ri\nabla) - \varphi_Z*
   |\widehat{g_R}|^2(-\ri\nabla))\gamma] | \\
   \leq Z^{7/3}\int_{\rz^3}\rd \xi\int_{\rz^3}\rd p\, \tt_\gamma(\xi)
   |\varphi(\xi) - \varphi( \xi-p)| |\hat g_{Z^{1/6}}(p)|^2.
 \end{multline}
 However, the integral of the right hand side converges to zero by
 uniform continuity of $\varphi$ and the fact that $\hat g\in
 \mathcal{S}(\rz^3)$. To see this we show that $\|\varphi -
 \varphi*|\hat g_{Z^{1/6}}|^2\|_\infty $ is, for large $Z$,
 arbitrarily small: Pick any positive $\epsilon$, then, there exists a
 $\delta$ such that for all $\xi,p\in \rz^3$ $|p|<\delta$ implies
 $|\varphi(\xi)-\varphi(\xi-p)|<\epsilon$. 

 Note also, that the fact that $\hat g$ is a Schwartz function
 implies, that we have a constant $c>0$ such that for all points
\begin{equation}
  \label{eq:se}
  |\hat g(\xi)|^2 < c/ |\xi|^4
\end{equation}
Moreover, pick $Z$ so large, that $2Z^{-1/6}
c\|\varphi\|_\infty\int_{|p|>\delta}|p|^{-4} < \epsilon/2$.  With this
choice, we estimate
    \begin{multline}
      \int \rd p |\varphi(\xi)-\varphi(\xi-p)| |\hat g_{Z^{1/6}}(p)|^2\\
      = \int_{|p|<\delta} \rd p |\varphi(\xi)-\varphi(\xi-p)| |\hat g_{Z^{1/6}}(p)|^2+ \int_{|p|>\delta} \rd p |\varphi(\xi)-\varphi(\xi-p)| |\hat g_{Z^{1/6}}(p)|^2\\
      \leq \epsilon/2 + 2\|\varphi\|_\infty c
      \int_{|p|>\delta}Z^{1/2}/(Z^{1/6}|p|)^4 \leq \epsilon/2 + \epsilon/2.
    \end{multline}
Thus, for any $\epsilon$ there is a $Z_0$ such that $Z>Z_0$ implies
$$\int_{\rz^3}\rd \xi\int_{\rz^3}\rd p \tt_\gamma(\xi)
|\varphi(\xi) - \varphi( \xi-p)| |\hat g_{Z^{1/6}}(p)|^2 \leq \epsilon
\int\rd \xi \tt_\gamma(\xi)= \epsilon.$$ This shows the claim that the
second line of \eqref{ungl} contains error terms only.
\end{proof}

\begin{proof}[Proof of Theorem \ref{thm:limit}]
  First we remark that it suffices to proof the theorem for positive
  $\varphi$ since we can split $\varphi$ into the part where it is
  strictly positive and strictly negative and do the proof separately
  for those cases.

  \begin{equation}
\begin{split}
  &\alpha Z^{7/3}\int_{\rz^3} \varphi(\xi)\tilde\tau_{\psi_{Z}}(\xi)
  \rd \xi = \langle\psi_Z, H_{Z,0}\psi_Z\rangle - \langle\psi_Z,
  H_{Z,\alpha}\psi_Z\rangle \\
  \leq & \tf{\rho_Z} +\const Z^{11/5}-\left(\langle \psi_Z, \sum_{n=1}^Z
    h_{{Z,\alpha},n}\psi_Z\rangle- D[\rho_Z]- \const \int
    \rho_{\psi_Z}^{4/3}\right)\\
  = &\int \rd \xi \rd x (h_{Z,0}(\xi,x))_-
  - \tr(h_{Z,\alpha}\gamma_{\psi_{Z}}) + \const Z^{11/5}\\
  \leq &\int \rd \xi \rd x (h_{Z,0}(\xi,x))_- -\int \rd \xi \rd x
  (h_{Z,\alpha}(\xi,x))_- + o(Z^{7/3}) \\
  = &\alpha Z^{7/3} \int \rd \xi \varphi(\xi) \tau_1(\xi) -
  \int_{h_{Z,\alpha}(\xi,x)<0\atop h_{Z,0}(\xi,x)>0} \rd \xi \rd x
  h_{Z,\alpha}(\xi,x)+o(Z^{7/3})\\
  \leq &\alpha Z^{7/3} \int \rd \xi \varphi(\xi) \tau_1(\xi) + \alpha
  Z^{7/3}\int_{h_{1,\alpha}(\xi,x)<0\atop h_{1,0}(\xi,x)>0} \rd \xi
  \rd x \varphi(\xi)+o(Z^{7/3}) \\
  = &\alpha Z^{7/3} \int \rd \xi \varphi(\xi) \tau_1(\xi) +
  Z^{7/3}o(\alpha)+o(Z^{7/3})
  \end{split}
\end{equation}
  where we used successively Lieb's asymptotic result on the atomic
  energy and Lieb's correlation inequality\cite{Lieb1979} in the first
  inequality. Next we use \eqref{eq:sum} and in the last two steps we
  dismiss a negative term and estimate the phase integral in the
  energy shell as $o(\alpha)$.

Now, dividing first by $Z^{7/3}$ and tending $Z$  to $\infty$ yields
\begin{equation}
  \alpha \limsup_{Z\to\infty} \int_{\rz^3} \varphi(\xi)\tilde\tau_{\psi_{Z}}(\xi)
  \rd \xi \leq \alpha \int \varphi(\xi) \tau_1(\xi)  \rd \xi+o( \alpha).
\end{equation}
Dividing by $\alpha$ and choosing $\alpha \downarrow 0$ yields the desired upper bound
reversing the sign of $\alpha$ and taking $\alpha\uparrow0$ yields the
reverse inequality for the inferior limit.

\end{proof}

\textit{Acknowledgment}: This work has been partially supported by the
DFG through the SFB-TR 12 ``Symmetries and Universality in Mesoscopic
Systems".

\def\cprime{$'$}

\end{document}